\newcommand\scalemath[2]{\scalebox{#1}{\mbox{\ensuremath{\displaystyle #2}}}}
\newcommand{\bo}[1]{\boldsymbol{#1}}
\newcommand{\ur}[1]{\mathrm{#1}}
\renewcommand{\H}{\ur{H}}
\newcommand{\T}{\ur{T}}
\newcommand{\btr}[1]{\ur{tr}\negmed\left(#1\right)}
\newcommand{\bnorm}[1]{\left\lVert#1\right\rVert}
\newcommand{\E}[1]{\ur{E}\negmed\left[#1\right]}
\newcommand{\bO}[1]{\mathcal{O}\negmed\left(#1\right)}
\def\negstrip#1 #2\relax{-#1}
\newcommand*{\negmed}{\mkern-\thinmuskip}
\theoremstyle{general} 
\theoremstyle{general} 
\theoremstyle{general} 
\theoremstyle{general} \newtheorem{proposition}{Proposition}
\theoremstyle{general} \newtheorem{p-corollary}{Corollary}[proposition]
\theoremstyle{general} 
\theoremstyle{general} 
\theoremstyle{remark}  \newtheorem{remark}{Remark}
\title{Downlink Massive MU-MIMO with Successively-Regularized Zero Forcing Precoding}
\author{\IEEEauthorblockN{Aravindh Krishnamoorthy\rlap{\textsuperscript{\IEEEauthorrefmark{2}\IEEEauthorrefmark{1}}}\,\,\,  and Robert Schober\rlap{\textsuperscript{\IEEEauthorrefmark{2}}}\\
\IEEEauthorblockA{\small \IEEEauthorrefmark{2}Friedrich-Alexander-Universit\"{a}t Erlangen-N\"{u}rnberg, Germany, 
\IEEEauthorrefmark{1}Fraunhofer Institute for Integrated Circuits (IIS) Erlangen, Germany}\vspace{-1cm}}\thanks{The authors acknowledge the financial support by the Federal Ministry of Education and Research of Germany in the programme of ``Souverän. Digital. Vernetzt.'' joint project 6G-RIC, project identification number: PIN 16KISK023.}}
\begin{document}
\maketitle

\begin{abstract}
In this letter, we consider linear precoding for downlink massive multi-user (MU) multiple-input multiple-output (MIMO) systems. We propose the novel successively-regularized zero forcing (SRZF) precoding, which exploits successive null spaces of the MIMO channels of the users, along with regularization, to control the inter-user interference (IUI) and to enhance performance and robustness to imperfect channel state information (CSI) at the base station (BS). We study the IUI characteristics of the proposed SRZF precoding for perfect and imperfect CSI at the BS. Furthermore, via computer simulations, we compare the sum rate of SRZF precoding with those of several baseline schemes including conventional and regularized zero forcing (ZF) precoding. Our simulation results reveal that, for massive MIMO systems with inter-user channel correlations, the proposed SRZF precoding significantly outperforms the considered baseline schemes for both perfect and imperfect CSI at the BS.
\end{abstract}

\section{Introduction}
Massive multi-user (MU) multiple-input multiple-output (MIMO) is a promising technology for the 6th generation (6G) and beyond communication systems where base stations (BSs) will employ a large number of antennas for transmission and reception \cite{Larsson2014}. For downlink massive MU-MIMO systems, linear zero forcing (ZF) precoding \cite{Wiesel2008,Sun2010} has been shown to be asymptotically optimal as the number of BS antennas grows large while the number of users is kept constant \cite{Larsson2014}.

However, in practical systems, ZF precoding achieves a high performance only in severely underloaded systems, where the total number of user antennas is much smaller than the number of BS antennas, and when the user channels are sufficiently distinct. On the other hand, for overloaded and critically-loaded systems, where the total number of user antennas is greater than or equal to the number of BS antennas, and when the user channels are correlated, ZF precoding suffers from poor performance, see, e.g., \cite{Krishnamoorthy2021b}. Alternatively, block diagonalization (BD) \cite{Spencer2004} has a slightly better performance than ZF precoding in critically-loaded systems. However, BD precoding suffers from poor performance in underloaded systems, see, e.g., simulation results in \cite{Krishnamoorthy2021b,Krishnamoorthy2021}.

In order to overcome the poor performance of ZF precoding and to enhance the robustness to imperfect channel state information (CSI) at the BS, Wiener filter (WF) and regularized ZF (RZF) precoding were proposed in \cite{Joham2005} and \cite{Peel2005,Bjornson2014}, respectively, where regularization constants were utilized to overcome the challenges arising from the inversion of ill-conditioned matrices. However, WF and RZF precoding introduce inter-user interference (IUI) among all users, which limits their performance, see, e.g., \cite{Krishnamoorthy2021b,Sung2009}.

On the other hand, successive null space (SNS) precoding, which exploits successive null spaces of the MIMO channel matrices of the users to enhance the performance and robustness to imperfect CSI at the BS, was proposed recently in \cite{Krishnamoorthy2021b}, in the context of MIMO rate-splitting multiple access (RSMA). SNS precoding can also be utilized as a linear precoding scheme if rate splitting and successive interference cancellation are to be avoided. In \cite{Krishnamoorthy2021b}, SNS precoding was shown to provide a balance between IUI cancellation and robustness, and achieved superior performance compared to BD, ZF, and RZF precoding for both perfect and imperfect CSI at the BS. Unfortunately, the algorithm for computing the SNS precoders provided in \cite{Krishnamoorthy2021b} entails a comparatively high computational complexity, limiting its practical applicability.

In this letter, we develop a novel, low-complexity successively-regularized zero forcing (SRZF) linear precoding scheme, which exploits successive null spaces of the MIMO channels of the users, along with regularization, to control the IUI and enhance performance and robustness to imperfect CSI at the BS. The main contributions of this letter are as follows.
\begin{itemize}
	\item We introduce the proposed novel SRZF precoding scheme.
	\item We study the IUI characteristics of SRZF precoding for perfect and imperfect CSI at the BS.
	\item We compare the sum rate (SR) performance of the proposed SRZF precoding with those of SNS \cite{Krishnamoorthy2021b}, BD \cite{Spencer2004}, ZF \cite{Wiesel2008}, RZF \cite{Peel2005,Bjornson2014}, and WF \cite{Joham2005} precoding via simulations.
\end{itemize}
Our simulation results reveal that, for massive MIMO systems with inter-user MIMO channel correlations and for both perfect and imperfect CSI at the BS, the proposed SRZF precoding scheme significantly outperforms conventional BD, ZF, RZF, and WF precoding, and provides a low-complexity alternative to SNS precoding with a moderate loss in performance.

The remainder of this letter is organized as follows. In Section \ref{sec:prelim}, we provide the system model and a brief review of SNS precoding \cite{Krishnamoorthy2021b}. In Section \ref{sec:proposed}, we describe the proposed SRZF precoding and decoding schemes. The IUI characteristics of SRZF precoding are analyzed in Section \ref{sec:pa}. Simulation results are presented in Section \ref{sec:sim}, and the letter is concluded in Section \ref{sec:con}.

\emph{Notation:} Boldface capital letters $\boldsymbol{X}$ and boldface lowercase letters $\boldsymbol{x}$ denote matrices and vectors, respectively. $\boldsymbol{X}^\ur{H}$, $\boldsymbol{X}^{+}$, and tr($\boldsymbol{X}$) denote the Hermitian transpose, pseudo-inverse, and trace of matrix $\boldsymbol{X}$, respectively. $\mathbb{C}^{m \times n}$ and $\mathbb{R}^{m \times n}$ denote the sets of $m \times n$ matrices with complex-valued and real-valued entries, respectively. The $(i,j)$-th entry of matrix $\boldsymbol{X}$ is denoted by $[\boldsymbol{X}]_{i,j}$ and the $i$-th element of vector $\boldsymbol{x}$ is denoted by $[\boldsymbol{x}]_{i}$. $\boldsymbol{I}_{n}$ denotes the $n \times n$ identity matrix. The circularly symmetric complex-valued Gaussian distribution with mean $\boldsymbol{\mu}$ and covariance matrix $\boldsymbol{\Sigma}$ is denoted by $\mathcal{CN}(\boldsymbol{\mu},\boldsymbol{\Sigma})$; $\sim$ stands for ``distributed as". $\E{\cdot}$ denotes statistical expectation.

\section{Preliminaries}
\label{sec:prelim}
In this section, we present the considered downlink MIMO system model, the imperfect CSI model, and a brief review of SNS precoding \cite{Krishnamoorthy2021b}.

\subsection{System Model}
\label{sec:sys_model}
We consider an \emph{underloaded} or \emph{critically loaded} downlink MU-MIMO communication system comprising a BS with $N$ transmit antennas and $K$ users equipped with $M_k,k=1,\dots,K,$ antennas such that $N \geq \sum_{k=1}^{K} M_k = M,$ where $M$ denotes the total number of user antennas.

We assume that $M_k$ streams are transmitted to user $k, k=1,\dots,K.$ If fewer streams are desired, then, user $k$'s MIMO channel can be transformed into a matrix with fewer effective receive antennas via singular value decomposition (SVD) \cite{Sun2010} and $M_k$ can be redefined accordingly. Let $\bo{s}_{k} \in \mathbb{C}^{M_k\times 1}$ denote the MIMO symbol vector of user $k$ satisfying $\E{\bo{s}_{k} \bo{s}_{k}^\H} = \bo{I}_{M_k}, k=1,\dots,K.$ We assume that the $[\bo{s}_{k}]_i, i=1,\dots,M_k, k=1,\dots,K,$ are independent. For transmission, $\bo{s}_{k}$ is precoded using linear precoder $\bo{P}_{k} \in \mathbb{C}^{N\times M_k}, k=1,\dots,K,$ to obtain overall transmit signal $\bo{x} = \sum_{k = 1}^{K} \bo{P}_{k}\bo{s}_{k}.$ The power constraint at the BS is given by
\begin{align}
	\sum_{k = 1}^{K} \btr{\bo{P}_{k} \bo{P}_{k}^\H} \leq P_\T, \label{eqn:pt}
\end{align}
where $P_\ur{T}$ denotes the available transmit power. Let $\frac{1}{\sqrt{\mathstrut L_k}} \bo{H}_k \in \mathbb{C}^{M_k\times N}$ denote the MIMO channel matrix between the BS and user $k,$ where scalar $L_k$ models the path loss between the BS and user $k,$ and $\bo{H}_k$ models the small scale fading. Here, we assume that all MIMO channel matrices have full row rank\footnote{We note that a row-rank deficient MIMO channel matrix can be transformed into a full row-rank matrix with fewer effective receive antennas via SVD, see, e.g., \cite{Sun2010}, \cite[App. C]{Scutari2009}.}. Then, the received signal at user $k$ is given by
\begin{align}
	\bo{y}_k = \frac{1}{\sqrt{\mathstrut L_k}} \bo{H}_k\bo{x} + \bo{z}_k = \frac{1}{\sqrt{\mathstrut L_k}} \bo{H}_k \sum_{k' = 1}^{K} \bo{P}_{k'}\bo{s}_{k'} + \bo{z}_k, \label{eqn:y_k}
\end{align}
where $\bo{z}_k \in \mathbb{C}^{M_k\times 1} \sim \mathcal{CN}(\bo{0}, \sigma^2\bo{I}_{M_k})$ denotes the complex additive white Gaussian noise (AWGN) vector at user $k.$

\subsection{Imperfect CSI Model}
\label{sec:impmimo}
\Copy{Imperfect}{In this letter, we assume that only quantized and outdated MIMO channel matrices $\bar{\bo{H}}_k,k=1,\dots,K,$ are available at the BS for computing the precoders. Nevertheless, we assume that the BS knows\footnote{This assumption is motivated by the slow variation of path loss $L_k$ and the resulting low feedback requirement.} scalar $L_k$ perfectly, and the users know their own MIMO channel matrices perfectly\footnote{In practice, the MIMO channel matrices can be estimated frequently and accurately at the receivers by exploiting, e.g., high-power pilot sequences transmitted by the BS.}. We model the imperfect MIMO channel matrices at the BS as $\frac{1}{\sqrt{\mathstrut L_k}} \bar{\bo{H}}_k = \frac{1}{\sqrt{\mathstrut L_k}} \bo{H}_k + \frac{1}{\sqrt{\mathstrut L_k}} \Delta\bo{H}_k,$ for $k=1,\dots,K,$ where $\bo{H}_k$ denotes the actual MIMO channel matrix of user $k,$ and $\Delta\bo{H}_k \in \mathbb{C}^{M_k\times N}$ models the CSI error at the BS \cite{Wang2007}.}

\subsection{SNS Precoding}
\label{sec:snsprec}
Let $\bo{\Psi}_k \in \mathbb{C}^{N\times N_k},$ $N_k = N-\sum_{k' = 1}^{k-1} M_{k'},$ denote a matrix whose columns are the unit-length basis vectors of the null space of the following augmented matrix: $\bo{F}_k = \begin{bmatrix} \bo{H}_{1}^\T & \bo{H}_{2}^\T & \dots & \bo{H}_{k-1}^\T\end{bmatrix}^\T \in \mathbb{C}^{(N-N_k)\times N},$ with the convention $\bo{\Psi}_{1} = \bo{I}_N.$ The SNS precoder \cite{Krishnamoorthy2021b} for user $k$ is constructed as $\bo{P}_{k} = \bo{\Psi}_{k} \bo{W}_{k}^\frac{1}{2},$ where $\bo{W}_{k} \in \mathbb{C}^{N_k\times M_k}$ is a rectangular matrix, which is used for combining the column vectors of $\bo{\Psi}_k$ into the precoding vectors and for power allocation (PA). Furthermore, for SNS precoding, the power constraint in (\ref{eqn:pt}) can be rewritten as $\sum_{k = 1}^{K} \btr{\bo{W}_{k}\bo{W}_{k}^\H} \leq P_\T.$ Now, with SNS precoding, the received signal in (\ref{eqn:y_k}) simplifies to $\bo{y}_k = \frac{1}{\sqrt{\mathstrut L_k}} \bo{H}_k \sum_{k' = 1}^{k} \bo{P}_{k'}\bo{s}_{k'} + \bo{z}_k.$ That is, user $k$ experiences IUI from users $1,\dots,k-1,$ and causes IUI to users $k+1,\dots,K,$ which, for the optimal $\bo{W}_{k}, k=1,\dots,K,$ enhances performance and robustness compared to BD, ZF, and RZF precoding, see \cite{Krishnamoorthy2021b} for details. However, obtaining the optimal $\bo{W}_{k}, k=1,\dots,K,$ via \cite[Algorithm 1]{Krishnamoorthy2021b} is computationally complex, limiting the applicability of SNS precoding in practice.

\section{Proposed SRZF Precoding and Decoding Schemes}
\label{sec:proposed}
In this section, we introduce the proposed SRZF precoding and decoding schemes and determine the resulting achievable user rates.

\subsection{Precoding Scheme}
\label{sec:prec}
\Copy{Precoder}{Let $\bar{\bo{H}} = \begin{bmatrix} \bar{\bo{H}}_1^\T & \dots & \bar{\bo{H}}_K^\T \end{bmatrix}^\T \in \mathbb{C}^{M\times N}.$ Furthermore, let $m_k = \sum_{k'=1}^{k-1} M_{k'} + 1,$  with $m_1 = 1,$ denote the index of the first row of user $k$'s channel in $\bar{\bo{H}},$ defined above. Moreover, let $\bo{J}_k \in \mathbb{R}^{M\times M}$ denote a real-valued, diagonal matrix such that $[\bo{J}_k]_{i_k,i_k} = 1,$ $i_k = m_{k},\dots,M,$ $k=1,\dots,K-1,$ and the remaining elements are zeros, with the convention $\bo{J}_K = \bo{0}.$ Then, the proposed precoder matrix for user $k,$ $\bo{P}_k \in \mathbb{C}^{N\times M_k},$ is chosen as:
\vspace{-0.35cm}
\begin{align}
	\bo{P}_k = \bar{\bo{H}}^\H \bar{\bo{\Phi}}_k \bo{D}_k^\frac{1}{2}, \label{eqn:pk}
\end{align}
where $\bar{\bo{\Phi}}_k = \begin{bmatrix}\bar{\bo{\phi}}_{k,1},\dots, \bar{\bo{\phi}}_{k,M_k}\end{bmatrix} \in \mathbb{C}^{N\times M_k}$ contains columns $m_k,\dots,m_k+M_k-1$ of matrix $(\bar{\bo{H}}\bar{\bo{H}}^\H + \alpha_k \bo{J}_k)^{-1},$ $\alpha_k > 0,k=1,\dots,K,$ is a small regularization constant whose role is described later, and $\bo{D}_k \in \mathbb{R}^{M_k \times M_k} \succcurlyeq \bo{0}, k=1,\dots,K,$ is the diagonal power-allocation matrix, which based on (\ref{eqn:pt}) and (\ref{eqn:pk}) has to satisfy the power constraint:
\begin{align}
	\sum_{k = 1}^{K} \btr{\bar{\bo{H}}^\H \bar{\bo{\Phi}}_k \bo{D}_k \bar{\bo{\Phi}}_k^\H \bar{\bo{H}}} \leq P_\T. \label{eqn:pt1}
\end{align}}

\begin{remark}
	SRZF precoding reduces to ZF precoding for $\alpha_k = 0, k=1,\dots,K,$ and to RZF precoding if $\alpha_k \bo{J}_k$ is replaced by $\alpha_k \bo{I}_M, k=1,\dots,K.$
\end{remark}

\Copy{Future1}{
\begin{remark}
	\label{rem:gsrzf}
	A more general version of SRZF precoding can be obtained by utilizing an arbitrary (diagonal) regularization matrix instead of $\alpha_k \bo{J}_k,$ which enables a greater control of the IUI, see Section \ref{sec:pa} for details; see also the banded structure proposed in \cite{Hu2017}.
\end{remark}
}

\subsection{Decoding at the Receivers}
Substituting (\ref{eqn:pk}) into (\ref{eqn:y_k}), the received signal at user $k$ can be simplified to: $\bo{y}_k = \frac{1}{\sqrt{\mathstrut L_k}} \bar{\bo{G}}_{k,k} \bo{D}_{k}^\frac{1}{2} \bo{s}_{k} + \bo{u}_{k} + \bo{z}_k,$ where $\bar{\bo{G}}_{k,k'} = \bo{H}_k \bar{\bo{H}}^\H \bar{\bo{\Phi}}_{k'} \in \mathbb{C}^{M_k\times M_k}, k=1,\dots,K,$ is the effective MIMO channel of user $k$ for the symbol vector of user $k'$ and $\bo{u}_{k} = \frac{1}{\sqrt{\mathstrut L_k}} \sum_{\substack{k' = 1\\k' \neq k}}^{K}  \bar{\bo{G}}_{k,k'} \bo{D}_{k'}^\frac{1}{2} \bo{s}_{k'} \in \mathbb{C}^{M_k \times 1}$ is the IUI for user $k.$ Here, at user $k,$ based on $\bo{y}_k$ above, the elements of $\bo{s}_k$ are decoded \emph{jointly} treating $\bo{u}_{k}$ as noise.

\subsection{Achievable Rates and Sum Rate}
\label{sec:ar}
Based on $\bo{y}_k,$ the achievable rate for the symbol vector of user $k,$ $k=1,\dots,K,$ is given as follows: $R_k = \log_2\det\Big(\bo{I}_{M_k} + \frac{1}{L_k} \bar{\bo{G}}_{k,k} \bo{D}_{k} \bar{\bo{G}}_{k,k}^\H \bar{\bo{N}}_k^{-1}\Big),$ where $\bar{\bo{N}}_k = \sigma^2 \bo{I}_{M_k} + \frac{1}{L_k} \sum_{\substack{k' = 1, k' \neq k}}^{K} \bar{\bo{G}}_{k,k'} \bo{D}_{k'} \bar{\bo{G}}_{k,k'}^\H.$ Thus, the SR is given by $R_\mathrm{sr} = \sum_{k = 1}^{K} R_k.$

\subsection{Fixed Power Allocation}
\label{sec:fpa}
\Copy{FPA}{In this letter, for simplicity, we restrict ourselves to fixed power allocation (FPA). For FPA, based on \cite{Bjornson2014}, we choose
\begin{align}
	[\bo{D}_k]_{l,l} = {P_\T}/{\big(M\operatorname{tr}\negmed\big(\bar{\bo{H}}^\H \bar{\bo{\phi}}_{k,l} \bar{\bo{\phi}}_{k,l}^\H \bar{\bo{H}}\big)\big)}, \label{eqn:fpa}
\end{align}
for $k=1,\dots,K,$ $l=1,\dots,M_k,$ where $\bar{\bo{\phi}}_{k,l}$ is the $l$-th column of $\bar{\bo{\Phi}}_k,$ as defined in Section \ref{sec:prec}. In (\ref{eqn:fpa}), the normalization by $\btr{\bar{\bo{H}}^\H \bar{\bo{\phi}}_{k,l} \bar{\bo{\phi}}_{k,l}^\H \bar{\bo{H}}}$ ensures that all precoding vectors of all users have the same norm. Furthermore, the normalized precoding vectors are allocated a power of $\frac{P_\T}{M}.$ It is easy to verify that the proposed FPA in (\ref{eqn:fpa}) satisfies the power constraint in (\ref{eqn:pt1}).

\begin{remark}
	We note that the performance of the proposed SRZF precoding can be further improved by optimizing the power allocation, see, e.g., \cite[Sec. V-D]{Krishnamoorthy2021b}.
\end{remark}
}

\subsection{Computational Complexity}
For the proposed SRZF scheme with FPA, computing $\bar{\bo{\Phi}}_K$ in (\ref{eqn:pk}) via Cholesky decomposition and equation solving entails a total complexity of $\bO{MN^2 + \frac{1}{6}M^3 + \frac{1}{3}M_K M^2}$ \cite{Krishnamoorthy2013}.
Next, computing $\bar{\bo{\Phi}}_k,k=1,\dots,K-1,$ entails a complexity of $\mathcal{O}\big(MN^2 + \frac{1}{2}M^2 + \frac{1}{3}(M$ ${}-M_K)M^2\big).$ Hence, SRZF precoding entails an overall complexity of $\mathcal{O}\Big(2 MN^2 + \frac{1}{2}M^3$ ${}+ \frac{1}{2}M^2\Big),$ which is only marginally higher than the complexity for ZF, RZF, and WF precoding, given by $\bO{2 MN^2 + \frac{1}{2} M^3},$ and BD precoding, given by $\mathcal{O}\Big(2 MN^2 +  \sum_{k=1}^{K}M_k^3\Big),$ see \cite{Krishnamoorthy2021b} for details.

On the other hand, computing the precoding vectors and PA for SNS precoding in Section \ref{sec:snsprec} via \cite[Algorithm 1]{Krishnamoorthy2021b} entails a significantly higher computational complexity of $\bO{2 MN^2 + 120 \big(\sum_{k=1}^{K} N_k^2\big)^\frac{3}{2}}.$

\section{IUI Analysis of SRZF Precoding}
\label{sec:pa}
In this section, we study the IUI characteristics of SRZF precoding for perfect and imperfect CSI.

\subsection{Interference Analysis for Perfect CSI}
\label{sec:iap}

Let $\breve{\bo{P}}_{k} = \bo{P}_{k} \mid_{\bar{\bo{H}} \to \bo{H}}$ and $\breve{\bo{\Phi}}_{k'} = \bar{\bo{\Phi}}_{k} \mid_{\bar{\bo{H}} \to \bo{H}}$ denote the matrices corresponding to $\bo{P}_k$ and $\bar{\bo{\Phi}}_k,$ respectively, for perfect CSI, i.e., $\Delta\bo{H}_k = \bo{0}, k=1,\dots,K.$ In the following proposition, we show that the precoder in (\ref{eqn:pk}) is a special case of SNS precoding \cite{Krishnamoorthy2021b}.
\begin{proposition}
	\label{prop:sns}
	We have
	\begin{align}
		\bo{H}_k\breve{\bo{P}}_{k'} = \begin{cases}
			-\alpha_{k'}\breve{\bo{\Omega}}_{k'}\bo{D}_{k'}^\frac{1}{2} & \text{if $k' < k$} \\
			(\bo{I}_{M_{k}} -\alpha_{k}\breve{\bo{\Omega}}_{k})\bo{D}_k^\frac{1}{2} & \text{if $k' = k$} \\
			\bo{0} & \text{if $k' > k,$}
		\end{cases} \label{eqn:sns:hkpksns}
	\end{align}
	where $\breve{\bo{\Omega}}_{k'} \in \mathbb{C}^{M_k \times M_{k'}}$ contains rows $m_k,\dots,m_k+M_k-1$ of $\breve{\bo{\Phi}}_{k'}.$	
\end{proposition}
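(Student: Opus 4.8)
The plan is to substitute the perfect-CSI precoder $\breve{\bo{P}}_{k'} = \bo{H}^\H \breve{\bo{\Phi}}_{k'} \bo{D}_{k'}^\frac{1}{2}$ and to recognise the product $\bo{H}_k \breve{\bo{P}}_{k'} = \bo{H}_k \bo{H}^\H \breve{\bo{\Phi}}_{k'} \bo{D}_{k'}^\frac{1}{2}$ as a single block of a structured matrix. Since $\bo{H}_k$ consists of rows $m_k,\dots,m_k+M_k-1$ of $\bo{H}$, the factor $\bo{H}_k \bo{H}^\H$ is precisely those rows of the Gram matrix $\bo{H}\bo{H}^\H$; and $\breve{\bo{\Phi}}_{k'}$ is, by definition, columns $m_{k'},\dots,m_{k'}+M_{k'}-1$ of $(\bo{H}\bo{H}^\H + \alpha_{k'}\bo{J}_{k'})^{-1}$. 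Hence $\bo{H}_k \bo{H}^\H \breve{\bo{\Phi}}_{k'}$ equals the $(k,k')$ sub-block of $\bo{H}\bo{H}^\H(\bo{H}\bo{H}^\H + \alpha_{k'}\bo{J}_{k'})^{-1}$, where blocks are indexed according to the per-user antenna partition.

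The key algebraic step is the identity $\bo{H}\bo{H}^\H(\bo{H}\bo{H}^\H + \alpha_{k'}\bo{J}_{k'})^{-1} = \bo{I}_M - \alpha_{k'}\bo{J}_{k'}(\bo{H}\bo{H}^\H + \alpha_{k'}\bo{J}_{k'})^{-1}$, which follows by writing $\bo{H}\bo{H}^\H = (\bo{H}\bo{H}^\H + \alpha_{k'}\bo{J}_{k'}) - \alpha_{k'}\bo{J}_{k'}$. This reduces the claim to reading off the $(k,k')$ block of $\bo{I}_M - \alpha_{k'}\bo{J}_{k'}(\bo{H}\bo{H}^\H + \alpha_{k'}\bo{J}_{k'})^{-1}$, after which the factor $\bo{D}_{k'}^\frac{1}{2}$ is reattached.

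The three cases then follow from the support of the diagonal selection matrix $\bo{J}_{k'}$, which carries ones in positions $m_{k'},\dots,M$ only, so that left-multiplication by $\bo{J}_{k'}$ annihilates rows $1,\dots,m_{k'}-1$ and leaves the rest unchanged. For $k'>k$ the rows $m_k,\dots,m_k+M_k-1$ lie in the annihilated range, so the second term vanishes, and since $k\neq k'$ the identity block is $\bo{0}$ as well. For $k'<k$ the identity block is $\bo{0}$, while the retained $(k,k')$ block of $(\bo{H}\bo{H}^\H + \alpha_{k'}\bo{J}_{k'})^{-1}$ is by definition $\breve{\bo{\Omega}}_{k'}$, giving $-\alpha_{k'}\breve{\bo{\Omega}}_{k'}$. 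For $k'=k$ the identity block contributes $\bo{I}_{M_k}$ and the retained diagonal block is $\breve{\bo{\Omega}}_k$, giving $\bo{I}_{M_k}-\alpha_k\breve{\bo{\Omega}}_k$. Multiplying each case on the right by $\bo{D}_{k'}^\frac{1}{2}$ yields (\ref{eqn:sns:hkpksns}).

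The main obstacle is purely the index bookkeeping: one must carefully align the row-block index $k$, the column-block index $k'$, the support of $\bo{J}_{k'}$, and the row/column selections defining $\breve{\bo{\Phi}}_{k'}$ and $\breve{\bo{\Omega}}_{k'}$, so as to confirm that the block of the regularised inverse retained by $\bo{J}_{k'}$ indeed coincides with $\breve{\bo{\Omega}}_{k'}$ in both the $k'<k$ and $k'=k$ cases. Special attention is needed for the last user, where the convention $\bo{J}_K=\bo{0}$ causes both the off-diagonal and the regularisation terms to drop out; once the identity and the support argument are in place, no genuine computation remains.
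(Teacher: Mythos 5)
Your proposal is correct and follows essentially the same route as the paper's proof: it rests on the identity $\bo{H}\bo{H}^\H(\bo{H}\bo{H}^\H + \alpha_{k'}\bo{J}_{k'})^{-1} = \bo{I}_M - \alpha_{k'}\bo{J}_{k'}(\bo{H}\bo{H}^\H + \alpha_{k'}\bo{J}_{k'})^{-1}$ and then extracts the relevant column and row blocks, using the support of $\bo{J}_{k'}$ to distinguish the three cases. The only difference is presentational (you read off the $(k,k')$ block in one step, while the paper selects columns and then rows sequentially), so no further changes are needed.
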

\begin{proof}
	Please see Appendix \ref{app:sns}.
\end{proof}

Based on (\ref{eqn:y_k}) and Proposition \ref{prop:sns}, the received signal at user $k$ can be simplified to
\begin{align}
	\scalemath{0.95}{\bo{y}_k = (\bo{I}_{M_{k}} -\alpha_{k}\breve{\bo{\Omega}}_{k})\bo{D}_k^\frac{1}{2}\bo{s}_k - \sum_{k'=1}^{k-1} \alpha_{k'}\breve{\bo{\Omega}}_{k'}\bo{D}_{k'}^\frac{1}{2}\bo{s}_{k'} + \bo{z}_k.} \label{eqn:y_k_srzf}
\end{align}
\Copy{GenRZF}{As seen from (\ref{eqn:y_k_srzf}), SRZF precoding is a special case of SNS precoding \cite{Krishnamoorthy2021b}, described earlier in Section \ref{sec:snsprec}. That is, for SRZF precoding, user $k$ experiences IUI only from users $1,\dots,k-1,$ and causes IUI to users $k+1,\dots,K.$ Furthermore, the IUI for user $k$ caused by users $1,\dots,k-1$ can be adjusted using the regularization constants $\alpha_k,k=1,\dots,K.$ Thus, SRZF precoding enables flexible interference management by combining the IUI cancellation capability offered by ZF precoding with the robustness of RZF precoding to inversion of ill-conditioned matrices.}

\Copy{WhyReg}{In this letter, for simplicity, analogous to the regularization constants proposed for RZF precoding \cite{Peel2005,Bjornson2014}, we set $\alpha_k = \frac{M\sigma^2}{P_\T},$ $k=1,\dots,K.$ Nevertheless, we note that the performance of the proposed SRZF precoders may be further enhanced by optimizing the values of $\alpha_k,k=1,\dots,K,$ for a given deployment scenario.}

\begin{remark}
	As seen from Proposition \ref{prop:sns}, the IUI experienced by a user depends on the user index. Hence, similar to SNS precoding, SRZF precoding is also sensitive to user ordering \cite{Krishnamoorthy2021b}. In this letter, we utilize the suboptimal fixed user-index permutation described in \cite{Krishnamoorthy2021b},  where the users are permuted in descending order of their single-user achievable rates. Nevertheless, the performance of SRZF precoding can be further improved by determining the optimal user ordering, e.g., via an exhaustive search.
\end{remark}

\subsection{Interference Analysis for Imperfect CSI}
\label{sec:pai}
\Copy{Robustness}{Next, in the following proposition, we study the robustness of SRZF precoding to imperfect CSI.

\begin{proposition}
	\label{prop:anaimp}
	Let $\breve{\bo{P}}_{k'}^{(k)} = \bo{P}_{k'} \mid_{\bar{\bo{H}}_k \to \bo{H}_k},$ $k'= 1,\dots,K,$ denote the SRZF precoders when the CSI of user $k$ at the BS is perfect and that of users $k', k' \neq k,$ $k'=1,\dots,K,$ is imperfect. Then, we have for $k' > k,$ 
	\begin{align}
		\bnorm{\bo{H}_k (\bo{P}_{k'} - \breve{\bo{P}}_{k'}^{(k)})} \leq \bnorm{\Delta \bo{H}_k} \bnorm{\bo{P}_{k'}}.
	\end{align}
\end{proposition}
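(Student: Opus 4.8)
The plan is to reduce the whole statement to the zero-forcing identity already contained in the $k' > k$ branch of Proposition \ref{prop:sns}. The crucial structural fact I would establish first is that, for $k' > k,$ the diagonal regularization matrix $\bo{J}_{k'}$ vanishes on the rows indexed by $m_k,\dots,m_k+M_k-1$ (user $k$), since by definition $[\bo{J}_{k'}]_{i,i} = 1$ only for $i \ge m_{k'} > m_k + M_k - 1.$ Consequently, the block row of user $k$ in $\bar{\bo{H}}\bar{\bo{H}}^\H + \alpha_{k'}\bo{J}_{k'}$ coincides with $\bar{\bo{H}}_k\bar{\bo{H}}^\H,$ and multiplying it by $\bar{\bo{\Phi}}_{k'}$ (columns $m_{k'},\dots,m_{k'}+M_{k'}-1$ of the inverse) returns the off-diagonal $(k,k')$ block of the identity, i.e., $\bo{0}.$ This is exactly the computation behind the $k'>k$ case of Proposition \ref{prop:sns}, so I would invoke it directly to obtain $\bar{\bo{H}}_k\bo{P}_{k'} = \bo{0}$ for $k' > k.$

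Next I would observe that the identical argument applies when user $k$'s channel block is replaced by $\bo{H}_k$ consistently. Writing $\bar{\bo{H}}^{(k)}$ for the stacked matrix obtained from $\bar{\bo{H}}$ by setting $\bar{\bo{H}}_k \to \bo{H}_k,$ the precoder $\breve{\bo{P}}_{k'}^{(k)}$ is built from $\bar{\bo{H}}^{(k)}$ in precisely the same way $\bo{P}_{k'}$ is built from $\bar{\bo{H}},$ and $\bo{H}_k$ is exactly the $k$-th block row of $\bar{\bo{H}}^{(k)}.$ Applying the $k'>k$ branch of Proposition \ref{prop:sns} to $\bar{\bo{H}}^{(k)}$ therefore yields $\bo{H}_k\breve{\bo{P}}_{k'}^{(k)} = \bo{0}$ for $k' > k.$

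With these two vanishing terms the conclusion follows quickly. Since $\bo{H}_k\breve{\bo{P}}_{k'}^{(k)} = \bo{0},$ we have $\bo{H}_k(\bo{P}_{k'} - \breve{\bo{P}}_{k'}^{(k)}) = \bo{H}_k\bo{P}_{k'}.$ Substituting $\bo{H}_k = \bar{\bo{H}}_k - \Delta\bo{H}_k$ and using $\bar{\bo{H}}_k\bo{P}_{k'} = \bo{0}$ gives $\bo{H}_k\bo{P}_{k'} = -\Delta\bo{H}_k\bo{P}_{k'},$ and taking spectral norms together with submultiplicativity yields $\bnorm{\bo{H}_k(\bo{P}_{k'} - \breve{\bo{P}}_{k'}^{(k)})} = \bnorm{\Delta\bo{H}_k\bo{P}_{k'}} \le \bnorm{\Delta\bo{H}_k}\bnorm{\bo{P}_{k'}},$ as claimed.

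I expect the only real obstacle to be making the first paragraph airtight: one has to track the block-row and block-column indices carefully to confirm that $\bo{J}_{k'}$ contributes nothing to user $k$'s row whenever $k' > k,$ and that the product of that row with the selected columns of the inverse is genuinely the off-diagonal identity block rather than something more complicated. Once this bookkeeping is done, reusing Proposition \ref{prop:sns} for both $\bar{\bo{H}}$ and $\bar{\bo{H}}^{(k)}$ and a one-line submultiplicativity argument complete the proof.
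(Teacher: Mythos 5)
Your proof is correct and follows essentially the same route as the paper: both reduce the claim to the $k'>k$ branch of Proposition \ref{prop:sns} applied twice (once to the all-imperfect stacked channel to get $\bar{\bo{H}}_k\bo{P}_{k'}=\bo{0}$, once to the mixed stacked channel to get $\bo{H}_k\breve{\bo{P}}_{k'}^{(k)}=\bo{0}$), then substitute $\bo{H}_k=\bar{\bo{H}}_k-\Delta\bo{H}_k$ and use submultiplicativity. Your explicit index bookkeeping for $\bo{J}_{k'}$ is just an unpacking of why that branch of Proposition \ref{prop:sns} holds, which the paper invokes as given.
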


\begin{proof}
	Please see Appendix \ref{app:anaimp}.
\end{proof}

From Proposition \ref{prop:anaimp}, we note that small imperfections in the CSI of user $k$ result in small additional interference from users $k+1,\dots,K,$ i.e., up to $\bO{\bnorm{\Delta \bo{H}_k}}.$ Furthermore, it can be shown that the additional interference experienced by users $k+1,\dots,K$ due to user $k$'s imperfect CSI is also small, see \cite[Prop. 3]{Krishnamoorthy2021b}. On the other hand, the performance of ZF precoding is significantly degraded by imperfect CSI. Furthermore, for RZF precoding, poor CSI quality of a user causes additional IUI to all users irrespective of their CSI quality, see Figure \ref{fig:ik} in Section \ref{sec:sim}.}

\section{Simulation Results}
\label{sec:sim}
In this section, we first validate the analytical expression in Proposition \ref{prop:anaimp}. Next, we compare the SR of the proposed SRZF precoding with the SRs of SNS \cite{Krishnamoorthy2021b}, conventional BD \cite{Spencer2004}, ZF \cite{Wiesel2008}, RZF \cite{Peel2005,Bjornson2014}, and WF \cite{Joham2005} precoding for perfect and imperfect CSI via computer simulations. 

\Copy{SystemSetup}{For our simulations, we utilize the channel model proposed in \cite{Raghavan2017} to capture the correlations in the MIMO channels of the users based on their angular positions. We consider a \emph{critically-loaded} system with $K=64$ users, $M_k=2$ antennas per user, and $N=128$ BS antennas\footnote{The results presented in this paper are also valid for underloaded systems. Furthermore, we note that for systems without significant inter-user MIMO channel correlations, which can not benefit from the enhanced robustness to inversion of ill-conditioned matrices, the performance of the proposed SRZF precoding is similar to that of RZF precoding.}. The users $1,\dots,\frac{K}{2}$ and $\frac{K}{2}+1,\dots,K$ spread uniformly around the BS and are located at distances of $d_k = 50\text{ m,}$ $k = 1,\dots,\frac{K}{2}$ and $d_k = 250\text{ m,}$ $k = \frac{K}{2}+1,\dots,K,$ respectively. Furthermore, users $\frac{K}{2}+1,\dots,K,$ are located within an angular spread of $0.5^\circ$ from users $1,\dots,\frac{K}{2},$ respectively. This causes a high correlation of the MIMO channels of users $k$ and $\frac{K}{2}+k,$ $k=1,\dots,\frac{K}{2}.$ The scalar path loss for user $k, L_k,$ is set to $d_k^2.$ Furthermore, for imperfect CSI, the additive i.i.d. Gaussian error model, where the elements of $\bo{\Delta H}_k,k=1,\dots,K,$ have zero mean and variance $\mu_k^2,$ is adopted. Furthermore, we assume that $\Delta\bo{H}_k$ and $\bo{H}_k$ are statistically independent. We assume $\sigma^2 = -35 \text{ dBm,}$ and utilize FPA for the proposed and the baseline schemes, except for SNS precoding, for which we adopt the baseline successive convex approximation (SCA)-based scheme given in \cite{Krishnamoorthy2021b}.}

\begin{figure}
	\centering
	\includegraphics[width=0.75\columnwidth]{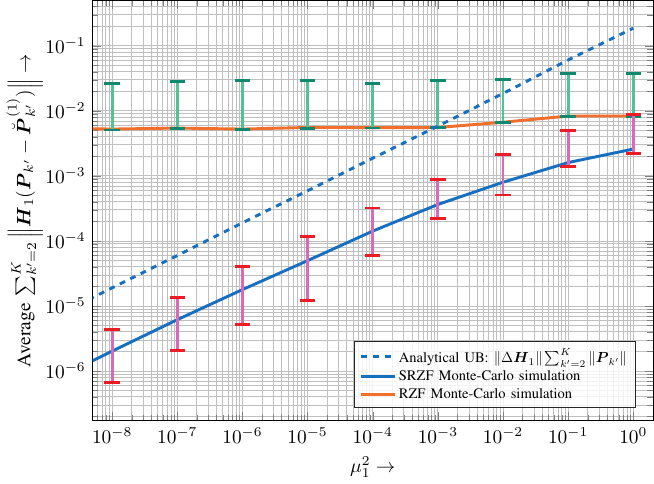}
	\caption{Average $\sum_{k'=2}^{K}\bnorm{\bo{H}_1(\bo{P}_{k'} - \breve{\bo{P}}_{k'}^{(1)})}$ with $K=64,$ $N=128,$ $M_k=2,\forall\,k,$ as a function of $\mu_1^2$ with $d_k = 50\text{ m,}\,\forall\,k,$ and $\mu_k^2 = 10^{-2},\forall\,k\neq 1.$ Error bars show the range of values encountered.}
	\label{fig:ik}
\end{figure}

\Copy{PropValidation}{
In Figure \ref{fig:ik}, the additional IUI experienced by user $1$ due to imperfect CSI at the BS, i.e., $\sum_{k'=2}^{K}\bnorm{\bo{H}_1(\bo{P}_{k'} - \breve{\bo{P}}_{k'}^{(1)})},$ obtained via Monte-Carlo simulation of $10^4$ channel realizations is shown as a function of $\mu_1,$ along with the analytical upper bound (UB) provided in Proposition \ref{prop:anaimp}. Here, we have assumed $P_\T = 30 \text{ dBm},$ $\mu^2_k = 10^{-2},$ $k=2,\dots,K.$ From the figure, we note that the analytical UB accurately bounds the total additional IUI. Furthermore, the average IUI is close to the minimum indicating that, in most cases, the additional IUI is small. On the other hand, for RZF, the CSI quality of user $1$ has negligible impact on its IUI, which is dominated by the IUI from users $k' = 2,\dots,K.$
}

\begin{figure}
	\centering
	\includegraphics[width=0.75\columnwidth]{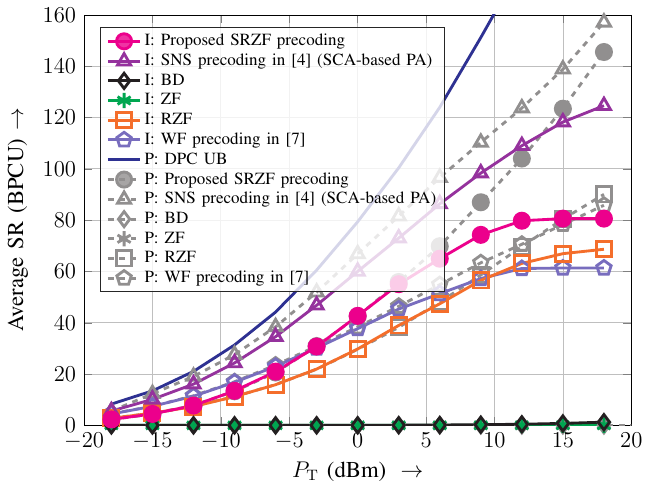}
	\caption{Average SR as a function of $P_\T$ for a critically-loaded system with $K=64,$ $N=128,$ $M_k=2,$  $d_k = 50\text{ m,}$ for $k=1,\dots,\frac{K}{2},$ $d_k = 250\text{ m,}$ for $k=\frac{K}{2}+1,\dots,K,$ $\mu_k^2 = 10^{-2},$ for $k = 1,\dots,K,$ $\sigma^2 = -35 \text{ dBm,}$ the MIMO channel model in \cite{Raghavan2017}, and FPA\protect\footnotemark. P: Perfect CSI, I: Imperfect CSI.}
	\label{fig:64}
	\vspace{-0.25cm}
\end{figure}
\footnotetext{The curves for SNS precoding in \cite{Krishnamoorthy2021b} utilize SCA for determining the precoding vectors and for PA.}

\Copy{RevisedResults}{Next, in Figure \ref{fig:64}, for obtaining the SR curves, we assume $\mu_k = 10^{-2},$ $k=1,\dots,K.$ From the figure, we observe that the proposed SRZF precoding scheme significantly outperforms RZF and WF precoding for both perfect and imperfect CSI, as it can benefit from both IUI cancellation and robustness to inversion of ill-conditioned matrices, thereby providing an attractive alternative to these conventional schemes for deployment in massive MU-MIMO systems. On the other hand, BD and ZF precoding have a very poor performance with an SR close to zero due to the high correlation in the MIMO channels of the users. WF precoding has a marginally higher performance than SRZF precoding at very low SNRs, as in this case, the WF precoder reduces to a matched filter \cite{Joham2005}. Lastly, for perfect CSI, the SNS precoding from \cite{Krishnamoorthy2021b}, which utilizes high-complexity SCA-based precoding vector selection and PA \cite{Krishnamoorthy2021b}, achieves a moderate SR increase over the proposed SRZF precoding scheme, which entails a substantially lower computational complexity. However, for imperfect CSI, the performance of SNS precoding from \cite{Krishnamoorthy2021b} is higher, owing to the optimized IUI management, compared to SRZF precoding, which utilizes fixed PA and fixed regularization constants.}

\section{Conclusion}
\label{sec:con}
In this letter, we considered linear precoding for downlink massive MU-MIMO systems. We proposed the novel SRZF precoding scheme, which exploits successive null spaces of the MIMO channels of the users, along with regularization, to enhance performance and robustness. We showed in Proposition \ref{prop:sns} that SRZF precoding is a special case of SNS precoding \cite{Krishnamoorthy2021b} and combines the IUI cancellation offered by ZF precoding with the robustness to inversion of ill-conditioned matrices of RZF precoding. Furthermore, we provided an UB for the additional IUI experienced by a user if the CSI is imperfect  at the BS. We compared the performance of the proposed scheme with those of SNS, BD, ZF, RZF, and WF precoding via computer simulations. Our results showed that for a critically-loaded system with a high correlation in the MIMO channels experienced by users and perfect and imperfect CSI at the BS, the proposed SRZF precoding outperforms the conventional schemes and provides a low-complexity alternative to SNS precoding with moderate loss in performance, making it an attractive option for downlink massive MU-MIMO systems.

\Copy{Future2}{The optimization of the regularization factors and user index permutation, taking into account the deployment scenario and CSI quality of the users, is an interesting topic for future study for further enhancing the performance and robustness of SRZF precoding. Furthermore, as mentioned in Remark \ref{rem:gsrzf}, a generalization of SRZF precoding to arbitrary diagonal regularization matrices to enhance performance and robustness is also of interest.}
\begin{appendices}
\renewcommand{\thesection}{\Alph{section}}
\renewcommand{\thesubsection}{\thesection.\arabic{subsection}}
\renewcommand{\thesectiondis}[2]{\Alph{section}:}
\renewcommand{\thesubsectiondis}{\thesection.\arabic{subsection}:}
\section{Proof of Proposition \ref{prop:sns}}
\label{app:sns}
\begin{proof}[\unskip\nopunct]
Since $(\bo{H}\bo{H}^\H + \alpha_{k'} \bo{J}_{k'}) (\bo{H}\bo{H}^\H + \alpha_{k'} \bo{J}_{k'})^{-1} = \bo{I}_M,$
$k' = 1,\dots,K,$ we have
\begin{align}
	&\bo{H}\bo{H}^\H(\bo{H}\bo{H}^\H + \alpha_{k'} \bo{J}_{k'})^{-1} \nonumber\\
	&\qquad {}= \bo{I}_M - \alpha_{k'} \bo{J}_{k'}(\bo{H}\bo{H}^\H + \alpha_{k'} \bo{J}_{k'})^{-1}. \label{eqn:app:sns:i}
\end{align}
Now, selecting the columns $m_{k'},\dots,m_{k'}+M_{k'}-1$ of the matrices on the left and right hand side of (\ref{eqn:app:sns:i}), we obtain:
\begin{align}
	\bo{H}\bo{H}^\H \breve{\bo{\Phi}}_{k'} = \begin{bmatrix}
		\bo{0}_{\sum_{k''=1}^{k'-1} M_{k''}\times M_{k'}} \\
		\bo{I}_{M_{k'}}  \\
		\bo{0}  
	\end{bmatrix} - \alpha_{k'} \bo{J}_{k'} \breve{\bo{\Phi}}_{k'}. \label{eqn:app:sns:hpk}
\end{align}
Next, selecting rows $m_k,\dots,m_k+M_k-1$ of the matrices on the left and right hand side of (\ref{eqn:app:sns:hpk}), noting that the first $\sum_{k''=1}^{k'-1} M_{k''}$ rows of $\bo{J}_{k'} \breve{\bo{\Phi}}_{k'}$ are zeros, and utilizing the definitions of $\bo{P}_k$ and $\breve{\bo{P}}_k$ in (\ref{eqn:pk}) and Section \ref{sec:iap}, respectively, we obtain (\ref{eqn:sns:hkpksns}). 
\end{proof}

\section{Proof of Proposition \ref{prop:anaimp}}
\label{app:anaimp}
\begin{proof}[\unskip\nopunct]
We note that, for $k' > k,$ $k,k'= 1,\dots,K,$
\begin{align}
	\scalemath{0.95}{\bnorm{\bo{H}_k(\bo{P}_{k'} - \breve{\bo{P}}_{k'}^{(k)})}} &{}\scalemath{0.95}{\overset{(a)}{=} \bnorm{\bo{H}_k\bo{P}_{k'}} = \bnorm{(\bar{\bo{H}}_k - \Delta\bo{H}_k)\bo{P}_{k'}}} \nonumber\\ &{}\scalemath{0.95}{\overset{(b)}{\leq} \bnorm{\Delta \bo{H}_k} \bnorm{\bo{P}_{k'}}},
\end{align}
where (a) and (b) hold because of Proposition \ref{prop:sns} with the following substitutions: For (a), $\bo{H}_{k'} \to \bar{\bo{H}}_{k'},$ $k' \neq k,$ and $\breve{\bo{P}}_{k'} \to \breve{\bo{P}}_{k'}^{(k)},$ $\forall\,k',$ leading to $\bo{H}_k \breve{\bo{P}}_{k'}^{(k)} = \bo{0}$ for $k' > k,$ and, for (b), $\bo{H}_{k'} \to \bar{\bo{H}}_{k'}$ and $\breve{\bo{P}}_{k'} \to \bo{P}_{k'},$ $\forall\,k',$ leading to $\bar{\bo{H}}_k\bo{P}_{k'} = \bo{0}$ for $k' > k.$
\end{proof}


\end{appendices}

\bibliographystyle{IEEEtran}
\bibliography{IEEEabrv,references}

\begin{thebibliography}{10}
\providecommand{\url}[1]{#1}
\csname url@samestyle\endcsname
\providecommand{\newblock}{\relax}
\providecommand{\bibinfo}[2]{#2}
\providecommand{\BIBentrySTDinterwordspacing}{\spaceskip=0pt\relax}
\providecommand{\BIBentryALTinterwordstretchfactor}{4}
\providecommand{\BIBentryALTinterwordspacing}{\spaceskip=\fontdimen2\font plus
\BIBentryALTinterwordstretchfactor\fontdimen3\font minus
  \fontdimen4\font\relax}
\providecommand{\BIBforeignlanguage}[2]{{%
\expandafter\ifx\csname l@#1\endcsname\relax
\typeout{** WARNING: IEEEtran.bst: No hyphenation pattern has been}%
\typeout{** loaded for the language `#1'. Using the pattern for}%
\typeout{** the default language instead.}%
\else
\language=\csname l@#1\endcsname
\fi
#2}}
\providecommand{\BIBdecl}{\relax}
\BIBdecl

\bibitem{Larsson2014}
E.~G. Larsson, O.~Edfors, F.~Tufvesson, and T.~L. Marzetta, ``Massive {MIMO}
  for next generation wireless systems,'' \emph{{IEEE} Commun. Mag.}, vol.~52,
  pp. 186--195, Feb. 2014.

\bibitem{Wiesel2008}
A.~Wiesel, Y.~C. Eldar, and S.~Shamai, ``Zero-forcing precoding and generalized
  inverses,'' \emph{{IEEE} Trans. Signal Processing}, vol.~56, pp. 4409--4418,
  Aug. 2008.

\bibitem{Sun2010}
L.~Sun and M.~R. McKay, ``Eigen-based transceivers for the {MIMO} broadcast
  channel with semi-orthogonal user selection,'' \emph{IEEE Trans. Signal
  Process.}, vol.~58, pp. 5246--5261, Oct. 2010.

\bibitem{Krishnamoorthy2021b}
\BIBentryALTinterwordspacing
A.~Krishnamoorthy and R.~Schober, ``Downlink {MIMO-RSMA} with successive
  null-space precoding,'' \emph{IEEE Trans. Wireless Commun. (Early Access)},
  May 2022. [Online]. Available: \url{https://doi.org/10.1109/TWC.2022.3173463}
\BIBentrySTDinterwordspacing

\bibitem{Spencer2004}
Q.~H. Spencer, A.~L. Swindlehurst, and M.~Haardt, ``Zero-forcing methods for
  downlink spatial multiplexing in multiuser {MIMO} channels,'' \emph{IEEE
  Trans. Signal Process.}, vol.~52, pp. 461--471, Feb. 2004.

\bibitem{Krishnamoorthy2021}
A.~{Krishnamoorthy} and R.~{Schober}, ``Uplink and downlink {MIMO-NOMA} with
  simultaneous triangularization,'' \emph{{IEEE} Trans. Wireless Commun.},
  vol.~20, pp. 3381--3396, Jan. 2021.

\bibitem{Joham2005}
M.~Joham, W.~Utschick, and J.~Nossek, ``Linear transmit processing in {MIMO}
  communications systems,'' \emph{{IEEE} Trans. Signal Processing}, vol.~53,
  pp. 2700--2712, Jul. 2005.

\bibitem{Peel2005}
C.~Peel, B.~Hochwald, and A.~Swindlehurst, ``A vector-perturbation technique
  for near-capacity multiantenna multiuser communication-part {I:} {Channel}
  inversion and regularization,'' \emph{{IEEE} Trans. Commun.}, vol.~53, pp.
  195--202, Feb. 2005.

\bibitem{Bjornson2014}
E.~Björnson, M.~Bengtsson, and B.~Ottersten, ``Optimal multiuser transmit
  beamforming: A difficult problem with a simple solution structure [lecture
  notes],'' \emph{{IEEE} Signal Processing Mag.}, vol.~31, no.~4, pp. 142--148,
  Jul. 2014.

\bibitem{Sung2009}
H.~Sung, S.-R. Lee, and I.~Lee, ``Generalized channel inversion methods for
  multiuser {MIMO} systems,'' \emph{{IEEE} Trans. Commun.}, vol.~57, pp.
  3489--3499, Nov. 2009.

\bibitem{Scutari2009}
G.~{Scutari}, D.~P. {Palomar}, and S.~{Barbarossa}, ``The {MIMO} iterative
  waterfilling algorithm,'' \emph{IEEE Tran. Sig. Proc.}, vol.~57, pp.
  1917--1935, Jan. 2009.

\bibitem{Wang2007}
C.~Wang, E.~K. Au, R.~D. Murch, W.~H. Mow, R.~S. Cheng, and V.~Lau, ``On the
  performance of the {MIMO} zero-forcing receiver in the presence of channel
  estimation error,'' \emph{{IEEE} Trans. Wireless Commun.}, vol.~6, pp.
  805--810, Mar. 2007.

\bibitem{Hu2017}
S.~Hu and F.~Rusek, ``A generalized zero-forcing precoder with successive
  dirty-paper coding in {MISO} broadcast channels,'' \emph{IEEE Trans. Wireless
  Commun.}, vol.~16, no.~6, pp. 3632--3645, Mar. 2017.

\bibitem{Krishnamoorthy2013}
A.~Krishnamoorthy and D.~Menon, ``Matrix inversion using {Cholesky}
  decomposition,'' in \emph{IEEE Signal Proc.: Algorithms, Architectures,
  Arrangements, and Applications (SPA) Conf.}, Sep. 2013, pp. 70--72.

\bibitem{Raghavan2017}
V.~Raghavan, S.~Subramanian, J.~Cezanne, A.~Sampath, O.~H. Koymen, and J.~Li,
  ``Single-user versus multi-user precoding for millimeter wave {MIMO}
  systems,'' \emph{{IEEE} J. Select. Areas Commun.}, vol.~35, pp. 1387--1401,
  Mar. 2017.

\end{thebibliography}

\includepdf[pages=-]{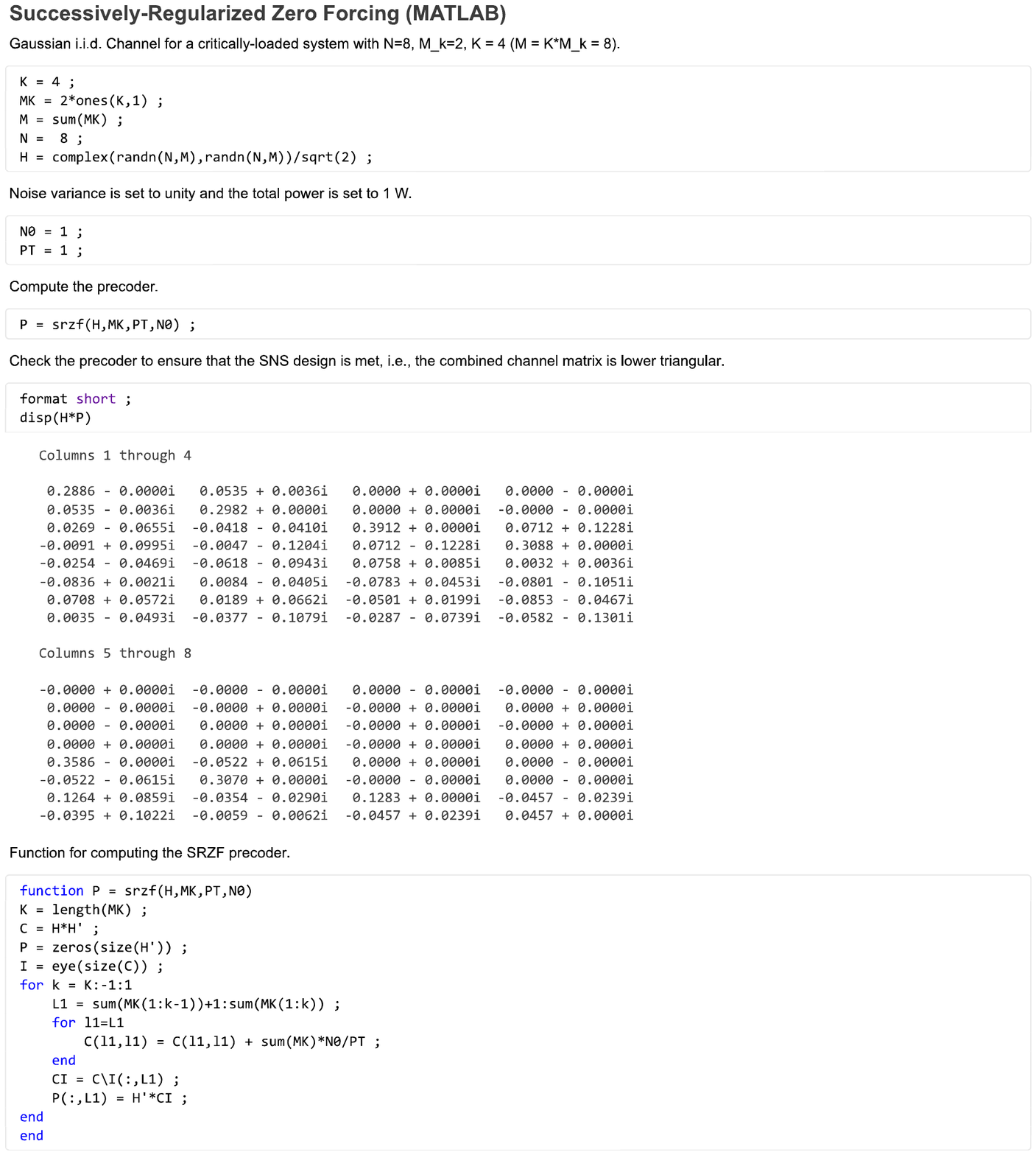}

\end{document}